\newtheorem{theorem}{Theorem}
\numberwithin{equation}{section}
\newcommand{\ohf}{\frac{1}{2}}
\newcommand{\Sfrac}[2]{{ \textstyle \frac{#1}{#2}}}
\newcommand{\R}{{\mathbb{R}}}
\newcommand{\scrO}{{\mathcal{O}}}
\newcommand{\ve}{{\varepsilon}}
\newcommand{\dtt}{\frac{\partial}{\partial t}}
\newcommand{\dxx}{\frac{\partial}{\partial x}}
\newcommand{\dx}{\partial_x}
\newtheorem{remark}{Remark}
\title{Approximate Conservation Laws in the KdV Equation}
\author{Samer Israwi\footnote{\texttt{s\_israwi83@hotmail.com}} \\
{\small Lebanese University, Faculty of Sciences 1} \\
{\small Hadath-Beirut, Lebanon} \\ \\
Henrik Kalisch\footnote{\texttt{henrik.kalisch@uib.no}} \\
{\small Department of Mathematics, University of Bergen} \\
{\small Postbox 7800, 5020 Bergen, Norway} \\
}
\date{}
\begin{document}
\maketitle
%
%
%
\begin{abstract}
The Korteweg-de Vries equation is known to yield a valid description 
of surface waves for waves of small amplitude and large wavelength.
The equation features a number of conserved integrals, 
but there is no consensus among scientists as to the physical meaning of these integrals. 
In particular, it is not clear whether these integrals are related to the conservation
of momentum or energy, and some researchers have questioned the conservation of energy
in the dynamics governed by the equation.
In this note it is shown that while exact energy conservation may not hold,
if momentum and energy densities and fluxes are defined in an appropriate way, then solutions 
of the Korteweg-de Vries equation give rise to {\it approximate}
differential balance laws for momentum and energy.
\end{abstract}
\section{Introduction}
The Korteweg-de Vries (KdV) equation 
\begin{align}\label{KdV_nondim}
\eta_{t}+ \eta_{x} + \ve \frac{3}{2} \eta \eta_{x}
                    + \ve \frac{1}{6} \eta_{xxx} = 0
\end{align}
is one of the most widely studied equations 
in mathematical physics today, and it stands as a paradigm in the field of 
completely integrable partial differential equations. 
The KdV equation admits a large number of closed-form solutions
such as the solitary wave, the cnoidal periodic solutions,
multisolitons and rational solutions.
It also features an infinite number of formally conserved
integrals which is one of the hallmarks of a completely integrable system.
Indeed the conservation can be made mathematically rigorous 
using the techniques developed in \cite{BonaSmith}.

While our understanding of this model equation is generally rather complete,
there appears to be one aspect which has not received much attention.
Indeed it seems that the link between the invariant integrals of the equation
and physical conservation laws has not been well understood.
In the present note we explore the ramifications of imposing
mechanical balance laws such as momentum and energy conservation
in the context of the KdV equation.

To explain this point further, recall that if the equation 
is given in the form \eqref{KdV_nondim}
then the first three conserved integrals are
\begin{align}
\label{conslaws}
\int_{- \infty}^{\infty} \eta \, dx,
\qquad
\int_{- \infty}^{\infty} \eta^2 \, dx,
\quad \mbox{ and } \quad
\int_{- \infty}^{\infty} \left( \Sfrac{1}{3} \eta_x^2  - \eta^3 \right) \, dx.
\end{align}
The first integral is found to be invariant with respect to time $t$
as soon as it is recognized that the KdV equation can be written in
the form
\begin{align}\label{kdv_mass}
\dtt \left( \eta \right) + 
\dxx \Big( \eta + \ve \frac{3}{4} \eta^2 + \ve \frac{1}{6} \eta_{xx} \Big) = 0.
\end{align}
The quantity appearing under the time derivative can clearly be
interpreted as the excess mass density, and as shown in \cite{AK4}, 
the expressions appearing under the spatial
derivative is the mass flux through a cross section of unit width
due to the passage of a surface wave.

Invariance of the second and third integrals is obtained from the identities
\begin{align}\label{kdv_II}
\dtt \left( \eta^2 \right) 
+ \dxx \left( \eta^2 +  \ve \eta^3 + \frac{\ve}{3} \eta \eta_{xx} 
- \frac{\ve}{6} \eta_{x}^2 \right) = 0,
\end{align}
\begin{align}\label{kdv_III}
\dtt \left( \eta^3 - \frac{1}{3} \eta_x^2 \right) 
+ \dxx \left( \eta^3 + \ve \frac{9}{8}  \eta^4 + \frac{2}{3} \eta_x \eta_t 
\right.
\left.
+ \frac{1}{3} \eta_{x}^2 +  \ve \frac{1}{18} \eta_{xx}^2 + \ve \frac{1}{2} \eta^2 \eta_{xx} \right) = 0.
\end{align}
When contemplating these formulas, the question naturally arises
if the invariant integrals in \eqref{conslaws} also have a physical meaning. 
Of course, the first integral represents the total excess mass,
but we do not know whether the second or third integral represent
any pertinent mechanical quantity.
A second natural question is whether the fluxes appearing in \eqref{kdv_II}
and \eqref{kdv_III} have a physical meaning.

As a matter if fact, it was noted by the authors of \cite{AS} that the densities and fluxes 
appearing in \eqref{kdv_II} and \eqref{kdv_III} do not represent any concrete physical quantities.
More recent work also clearly casts doubt on the exact conservation 
of energy in the KdV equation \cite{KRI2015,KRIR2017}.
In light of these findings, one may then ask what the appropriate
densities and fluxes are if momentum and energy conservation are
to be understood in the context of the KdV equation.

The main result of the present letter is that one may define
densities and fluxes that represent momentum and energy
conservation and lead to {\it approximate}, but not
exact conservation.
Indeed, we will show that if these quantities are chosen
correctly, then momentum and energy conservation hold
to the same order as the KdV equation is valid. 
To be more precise, if we call the non-dimensional momentum density by 
$I(\eta)$, and the non-dimensional flow force 
(momentum flux plus pressure force) by $q_I(\eta)$, 
we obtain the approximate local balance law 
\begin{align}\label{intro_I}
\frac{\partial}{\partial t} I(\eta) + \frac{\partial}{\partial x} q_{I}(\eta)
                     = \scrO(\ve^2).
\end{align}
Similarly, the approximate energy balance 
\begin{align}\label{intro_E}
\frac{\partial}{\partial t} E(\eta) + \frac{\partial}{\partial x} q_{E}(\eta)
                     = \scrO(\ve^2)
\end{align}
follows if the expressions for the energy density $ E(\eta)$ 
and the energy flux plus work rate due to pressure force $ q_E(\eta)$
are chosen appropriately.

The plan of the paper is as follows.
In Section 2, we explain the physical background against which
the KdV equation is used as an approximate water-wave model.
The developments in Section 2 are based on firm mathematical
theory which has been developed in the last two decades,
and is summarized handily in \cite{LannesBOOK}.
Then using this background material, 
the statements of momentum and energy conservation introduced above
will be made mathematically precise in sections 3 and 4.

\section{The KdV equation in the context of surface water waves}

We study the KdV equation as a model equation for waves at the free surface of an incompressible, inviscid fluid 
running in a narrow open channel where transverse effects can be neglected. 
Let $h_0$ be the depth of the undisturbed fluid.
Denoting by $\lambda$ a typical wavelength and by $a$ a typical amplitude
of a wavefield to be described, the number
$\ve = a/h_0$ represents the relative amplitude, and 
$\mu = h_0^2 / \lambda^2$ measures the inverse relative wavenumber.
The geometric setup of the problem is indicated in Figure \ref{Figure1}.

In suitably non-dimensionalized variables,
the surface wave excursion of right-going waves can be shown to satisfy the relation
\begin{align*}
\eta_{t}+ \eta_{x} + \ve \frac{3}{2} \eta \eta_{x}
                    + \mu \frac{1}{6} \eta_{xxx} = \scrO(\ve^2, \ve \mu, \mu^2).
\end{align*}
If the wave motion is such that both $\ve$ and $\mu$ are small and of similar size,
then we can take equation \eqref{KdV_nondim}
to obtain an approximate description of the dynamics of the free surface.
The approximation can be made rigorous using the
techniques in \cite{BCL,Craig,Israwi,LannesBOOK,SchneiderWayne} and others.
Sometimes the Stokes number $\mathcal{S} = \ve / \mu$ is introduced
in order to quantify the applicability of the equation to a particular regime of surface waves.
Let us assume for the time being that the Stokes number is equal to unity,
so that we can work with a single small parameter $\ve$.
In this scaling, we may also assume that initial data $\eta_0$ are given,
such that for any $k>0$, we have $\|\dx^k \eta_0\|_{L^2} \le \scrO(1)$.

\begin{figure}
  \begin{center}
  \includegraphics[scale=0.8]{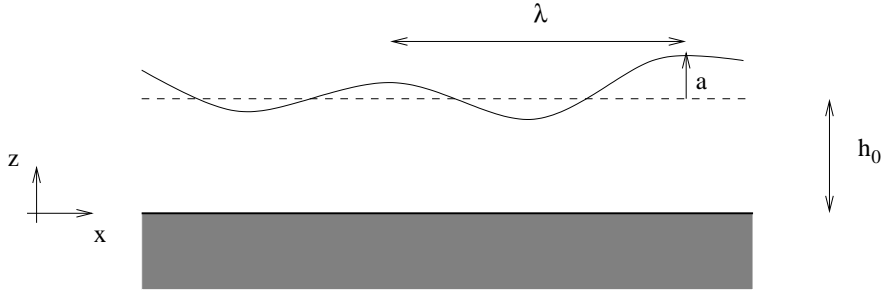}
  \end{center}
  \caption{\small The schematic elucidates the geometric setup of the problem.
           The free surface is described by a function $\eta(x,t)$, and
           the $x$-axis is aligned with the flat bed.
}
\label{Figure1}
\end{figure}

Using the aforementioned techniques, it can be shown that the 
velocity field $(\phi_{x},\phi_{z})$ 
at any non-dimensional height $z$ in the fluid column
can be expressed in terms of a solution of the KdV equation 
to second order accuracy in $\ve$ by
\begin{align}\label{phi_eq_x}
\phi_{x}(x,z,t) = \eta - \ve \frac{1}{4}\eta^2
                       + \ve \Big( \frac{1}{3}- \frac{z^2}{2} \Big) \eta_{xx},
\end{align}
\begin{align}\label{phi_eq_y}
\phi_{z}(x,z,t) = -  \ve z \eta_{x}.
\end{align}
%
%
%
%
Similarly, the pressure can be expressed in terms of a
solution $\eta$ of the KdV equation as follows.
First define the dynamic pressure by subtracting the
hydrostatic contribution:
\begin{equation*}
P' = P - P_{atm}+  g z.
\end{equation*}
Since the atmospheric pressure is of a magnitude much smaller than
the significant terms in the equation, it will be assumed to be zero.
As shown in \cite{AK2}, the dynamic pressure $P'$ can the be approximated 
to second order in $\ve$ by
\begin{equation}\label{P_expression}
P'= \eta - \ohf \ve (z^2-1)\eta_{xx}.
\end{equation}
%

\section{Approximate momentum balance}
%
%
\begin{figure}
  \begin{center}
  \includegraphics[scale=0.8]{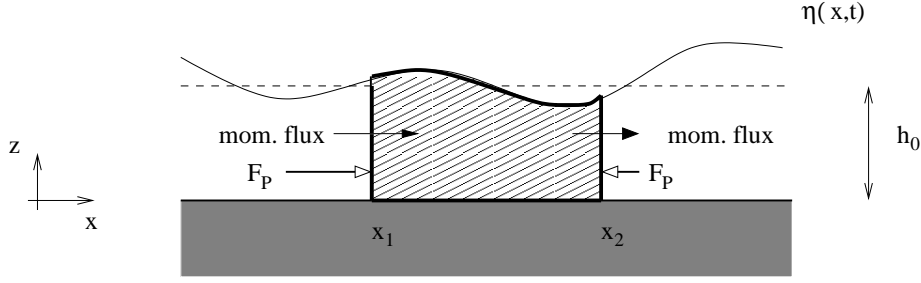}
  \end{center}
  \caption{\small The flow force $q_I$ represents the sum
                  of momentum flux and pressure force on
                  a fluid element of unit width, reaching from the free surface 
                  to the bed.}
\label{Figure2}
\end{figure}
Using ideas from \cite{AK1,AK2}, we can define an approximate momentum density
$I$ by substituting the horizontal velocity $\phi_x$ given by
\eqref{phi_eq_x} into the integral
$$
\int_{0}^{1+\ve\eta(x,t)} \phi_x(x,z,t) \, dz.
$$
In the same vein, the momentum flux and pressure force are 
constructed using the integrals
\begin{equation}
\int_{0}^{1+\ve \eta} \phi_x^2 \, dz  \quad \mbox{ and } \quad
\int_{0}^{1+\ve \eta} p \,  dz.
\end{equation}
In this way, using the asymptotic analysis delineated in \cite{AK4},
the horizontal momentum density in the KdV context is found to be
\begin{equation}\label{momDensity}
I = \ve \eta + \ve^2 \Sfrac{3}{4} \eta^2 + \ve^2 \Sfrac{1}{6} \eta_{xx},
\end{equation}
and the horizontal momentum flux plus pressure force is
\begin{equation}\label{momFlux}
q_I =  \Sfrac{1}{2} + \ve \eta + \ve^2 \Sfrac{3}{2} \eta^2 + \ve^2 \Sfrac{1}{3} \eta_{xx}.
\end{equation} 
The approximate local momentum balance can be formulated as follows.
\begin{theorem}
$\mathrm{(Momentum \, balance) }$
Suppose $\eta$ is a solution of \eqref{KdV_nondim} with initial data
$\eta_0$ satisfying $\| \partial_x^k \eta_0 \|_{L^2} = \scrO(1)$
for some integer $k \ge 5$.
Then there is a constant $C$, so that the estimate
$$
\left\| \dtt \Big\{ \eta + \ve \Sfrac{3}{4} \eta^2 + \ve \Sfrac{1}{6} \eta_{xx} \Big\}
  + \dxx \Big\{ \Sfrac{1}{2} + \eta + \ve \Sfrac{3}{2}\eta^2 + \ve \Sfrac{1}{3}\eta_{xx} \Big\} \right\|_{L^2} 
   \le C \ve^2
$$
holds for all $t \in [0,\infty)$.
\end{theorem}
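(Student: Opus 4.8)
The plan is to carry out a direct computation: expand the two derivative terms, use the KdV equation itself to eliminate every time derivative, exhibit the cancellation of the $\scrO(1)$ and $\scrO(\ve)$ contributions, and bound the surviving $\scrO(\ve^2)$ remainder in $L^2$. Writing $D = \eta + \ve\Sfrac{3}{4}\eta^2 + \ve\Sfrac{1}{6}\eta_{xx}$ for the density and $Q = \Sfrac{1}{2} + \eta + \ve\Sfrac{3}{2}\eta^2 + \ve\Sfrac{1}{3}\eta_{xx}$ for the flux, I would first compute
$$
D_t = \eta_t + \ve\Sfrac{3}{2}\eta\eta_t + \ve\Sfrac{1}{6}\eta_{xxt}, \qquad
Q_x = \eta_x + 3\ve\eta\eta_x + \ve\Sfrac{1}{3}\eta_{xxx}.
$$
Adding these and substituting $\eta_t + \eta_x = -\ve\Sfrac{3}{2}\eta\eta_x - \ve\Sfrac{1}{6}\eta_{xxx}$ from \eqref{KdV_nondim} into the leading pair removes the $\scrO(1)$ part and, after combining with the explicit $\ve$-terms from $Q_x$, leaves a surviving contribution $\ve\Sfrac{3}{2}\eta\eta_x + \ve\Sfrac{1}{6}\eta_{xxx}$ together with the two time-derivative terms $\ve\Sfrac{3}{2}\eta\eta_t$ and $\ve\Sfrac{1}{6}\eta_{xxt}$.

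The next step is to treat these two remaining time derivatives by invoking the equation a second time. For the first I would use
$$
\eta_t = -\eta_x - \ve\Sfrac{3}{2}\eta\eta_x - \ve\Sfrac{1}{6}\eta_{xxx},
$$
and for the second the identity $\eta_{xxt} = \dx^2 \eta_t$ obtained by differentiating this relation twice in $x$. Substituting, the $\scrO(\ve)$ pieces cancel exactly against the surviving $\ve\Sfrac{3}{2}\eta\eta_x + \ve\Sfrac{1}{6}\eta_{xxx}$, and what is left is a sum of terms each carrying an explicit factor $\ve^2$, namely fixed multiples of $\eta^2\eta_x$, $\eta\eta_{xxx}$, $\dx^2(\eta\eta_x)$ and $\dx^5\eta$. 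The appearance of $\dx^5\eta$ as the highest derivative is precisely why the hypothesis demands $k \ge 5$.

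To conclude, I would estimate the $L^2$ norm of this $\scrO(\ve^2)$ remainder term by term. The linear contribution is controlled by $\|\dx^5\eta\|_{L^2}$, while the nonlinear terms are dominated by products such as $\|\eta\|_{L^\infty}\|\dx^3\eta\|_{L^2}$ and $\|\dx\eta\|_{L^\infty}\|\dx^2\eta\|_{L^2}$, each bounded by a polynomial in $\|\eta\|_{H^5}$ using the Sobolev embedding $H^1 \hookrightarrow L^\infty$. The main obstacle is not this algebra but the uniformity of the constant $C$ over all $t \in [0,\infty)$: one must ensure that $\|\eta(\cdot,t)\|_{H^5}$ remains bounded for every time. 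This I would obtain from the global well-posedness theory of the KdV equation together with its infinite hierarchy of conserved quantities, which provide a priori control of the Sobolev norms and thereby bound $\|\dx^k\eta(\cdot,t)\|_{L^2}$ uniformly in $t$ in terms of the data $\eta_0$ (cf. the techniques of \cite{BonaSmith}). With such a uniform-in-time bound in hand, the polynomial in $\|\eta\|_{H^5}$ is bounded by a constant, and the asserted estimate $\|D_t + Q_x\|_{L^2} \le C\ve^2$ follows for all $t \in [0,\infty)$.
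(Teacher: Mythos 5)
Your proposal is correct and follows essentially the same route as the paper: use the KdV equation to cancel the $\scrO(1)$ and $\scrO(\ve)$ parts (applying the equation twice), identify the residual as explicit $\ve^2$-multiples of terms involving up to $\partial_x^5\eta$, and bound these in $L^2$ uniformly in time via the global $H^k$ bound from the well-posedness theory of \cite{BonaSmith}. The only cosmetic difference is that the paper groups the leftover terms as $\ve\Sfrac{3}{2}\eta(\eta_t+\eta_x)+\ve\Sfrac{1}{6}\partial_x^2(\eta_t+\eta_x)$ before the second substitution, whereas you substitute for $\eta_t$ term by term; the algebra is identical.
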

\begin{proof}
\noindent
It was shown in Prop. 6 in \cite{BonaSmith} that given initial data
$\eta_0 \in H^k$, there exists a solution $\eta(x,t)$
which is bounded in the space $C(0,\infty,H^k)$.
Thus all ensuing computations hold rigorously for members of $H^k$.
Using the assumption that $\eta$ satisfies the KdV equation
and dividing through by $\ve$,
we can write the integrand in the statement of the theorem in the following way.
\begin{multline*}
\Big\{ \eta +  \ve \Sfrac{3}{4} \eta^2 +  \ve \Sfrac{1}{6} \eta_{xx} \Big\}_t 
+ \Big\{ \Sfrac{1}{2} + \eta + \ve \Sfrac{3}{2} \eta^2 + \ve \Sfrac{1}{3} \eta_{xx} \Big\}_x \\
=  \Big( \eta_t + \eta_x +  \ve \Sfrac{3}{2} \eta \eta_x +  \ve \Sfrac{1}{6} \eta_{xxx} \Big)
  + \ve \Big( \Sfrac{3}{4} \eta^2 + \Sfrac{1}{6} \eta_{xx} \Big)_t + \ve 
      \Big(\Sfrac{3}{4}\eta^2 + \Sfrac{1}{6}\eta_{xx} \Big)_x \\
= 0 \, + \,  \ve \Sfrac{3}{4} \Big( \eta_t^2 + \eta_x^2 \Big) +  \ve \Sfrac{1}{6} \Big(\eta_{xxt} + \eta_{xxx} \Big)
\qquad \qquad \qquad \qquad \qquad
\qquad  \\
=  \ve \Sfrac{3}{2} \eta \Big( - \ve \Sfrac{3}{2} \eta \eta_{x} - \ve \Sfrac{1}{6} \eta_{xxx} \Big) 
   + \ve \Sfrac{1}{6} \partial_x^2 \Big( - \ve \Sfrac{3}{2} \eta \eta_{x} - \ve \Sfrac{1}{6} \eta_{xxx} \Big) .
\qquad  \qquad 
\end{multline*}
Since $\|\eta_0\|_{H^5}$ is on the order of $1$, 
and $\|\eta(\cdot,t)\|_{H^k}$ is bounded for all time, we have
\begin{equation*}
\sup_t \left\| \dtt\left\{ \eta + \ve \Sfrac{3}{4} \eta^2 + \ve \Sfrac{1}{6} \eta_{xx} \right\}
  + \dxx \left\{\Sfrac{1}{2} + \eta + \ve \Sfrac{3}{2} \eta^2 + \ve \Sfrac{1}{3}\eta_{xx} \right\} \right\|_{L^2} 
\le C^2 \ve^2
\end{equation*}
for some constant $C$ for $t \in [0,\infty)$. \\
\end{proof}
\begin{remark}
It should be noted that the $L^2$ norm used in the proof can be replaced
by any Sobolev norm $H^k$ so long as the initial data are regular enough.
Indeed it can be seen immediately from the proof that if the approximate momentum 
balance is to be proved in $H^k(\R)$, then the initial data should
be given in $H^{k+5}(\R)$.
\end{remark}

Another important point is that the estimate in Theorem 1 is independent
of the time $t$. In other words, the momentum balance is global in the
sense that it holds as long as the solution of the KdV equation exists.
This is in stark contrast to the proofs providing the approximation
property of the KdV equation for the water-wave problem which are 
generally such that the error is bounded by $C\ve^2 (1+t)$,
so that as $t$ gets larger, the approximation degenerates, and
if $t \sim 1 / \ve$, the approximation is only on the order of $\ve$.
Thus in this sense the momentum balance is self-consistent
i.e. it is independent of the approximate nature of the KdV equation 
with regards to the water-wave problem.

\section{Approximate energy balance}
%
%
Using ideas from \cite{AK2,AK4}, we can define an approximate energy density
$E$ by substituting the horizontal velocity $\phi_x$ given by
\eqref{phi_eq_x} into the integral
$$
\int_{0}^{1+\ve\eta} \big\{ \phi_x^2 + \phi_z^2 +  z \big\} \, dz,
$$
where the last term in the integrand represent the potential energy.
In the same vein, the energy flux and work rate due to pressure force are 
constructed using the integrals
\begin{equation}
\int_{-0}^{1+\ve \eta} \big\{ \phi_x^2 + \phi_z^2 +  z  \big\} \phi_x \, dz  
\quad \mbox{ and } \quad
\int_{-0}^{1+\ve \eta} p \phi_x \,  dz.
\end{equation}
In this way, using again the asymptotic analysis explained in \cite{AK4},
the horizontal energy density in the KdV context is found to be
\begin{equation}
E = \frac{1}{2} + \ve \eta + \ve^2 \eta^2,
\end{equation}
and the horizontal energy flux plus work rate due to pressure force is
\begin{equation}
q_E =  \ve \eta + \ve^2 \frac{7}{4}\eta^2 + \ve^2 \frac{1}{6}\eta_{xx}.
\end{equation} 
For the energy balance, we have the following theorem. 
\begin{theorem}
$\mathrm{(Energy \, balance) }$
Suppose $\eta$ is a solution of \eqref{KdV_nondim} with initial data
$\eta_0$ satisfying $\| \partial_x^k \eta_0 \|_{L^2} = \scrO(1)$ for an integer $k \ge 4$. 
Then there is a constant $C$, so that the estimate
$$
\left\| \dtt \Big\{\Sfrac{1}{2} + \eta + \ve \eta^2 \Big\} 
     + \dxx \Big\{\eta + \ve \Sfrac{7}{4}\eta^2 + \ve \Sfrac{1}{6}\eta_{xx} \Big\} \right\|_{L^2} 
   \le \scrO(\ve^2)
$$
holds for all $t \in [0,\infty)$. 
\end{theorem}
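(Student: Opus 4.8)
The plan is to follow the template of the momentum balance proof verbatim, replacing the momentum density and flux by their energy counterparts. First I would invoke Proposition~6 of \cite{BonaSmith} to guarantee that the initial datum $\eta_0 \in H^k$ launches a solution bounded in $C(0,\infty;H^k)$; this makes every manipulation below an identity between genuine $H^k$-functions and, crucially, supplies a bound on $\|\eta(\cdot,t)\|_{H^k}$ that is uniform in $t$. Dividing the quantity in the statement by $\ve$ (a harmless rescaling that turns the density $E$ and flux $q_E$ into the bracketed expressions shown, the additive constant being annihilated by $\dtt$), I would expand the two derivatives:
\[
\dtt\{\tohf + \eta + \ve\eta^2\} = \eta_t + 2\ve\eta\eta_t,
\qquad
\dxx\{\eta + \ve\Sfrac{7}{4}\eta^2 + \ve\Sfrac16\eta_{xx}\} = \eta_x + \ve\Sfrac{7}{2}\eta\eta_x + \ve\Sfrac16\eta_{xxx}.
\]

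The heart of the argument is a two-stage use of the KdV equation. In the first stage I would group the $\scrO(1)$ terms with the dispersive term and recognise $\eta_t+\eta_x+\ve\Sfrac16\eta_{xxx} = -\ve\Sfrac32\eta\eta_x$ from \eqref{KdV_nondim}. Combining this with the first-order-in-$\ve$ convective terms, the coefficient arithmetic $-\Sfrac32+\Sfrac72 = 2$ makes the remaining $\scrO(\ve)$ contribution collapse neatly to $2\ve\eta(\eta_t+\eta_x)$. In the second stage I would substitute the KdV relation once more, now in the form $\eta_t+\eta_x = -\ve\Sfrac32\eta\eta_x - \ve\Sfrac16\eta_{xxx}$, converting this factor into a pure second-order remainder
\[
2\ve\eta(\eta_t+\eta_x) = -3\ve^2\eta^2\eta_x - \Sfrac13\ve^2\eta\eta_{xxx}.
\]
The decisive point, and the only place where anything could fail, is precisely this cancellation: the flux coefficient $\Sfrac74$ is exactly what is needed so that no uncompensated $\scrO(\ve)$ term survives. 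Thus the main obstacle is not analytic but algebraic, namely verifying that the chosen density and flux were constructed so that all first-order terms vanish identically; the analogous sharpness is already visible in the momentum proof.

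It then remains to estimate the $L^2$ norm of the displayed remainder. I would bound $\|\eta^2\eta_x\|_{L^2} \le \|\eta\|_{L^\infty}^2\|\eta_x\|_{L^2}$ and $\|\eta\eta_{xxx}\|_{L^2}\le \|\eta\|_{L^\infty}\|\eta_{xxx}\|_{L^2}$, using the Sobolev embedding $H^1(\R)\hookrightarrow L^\infty(\R)$ to control the $L^\infty$ factors. The highest derivative appearing is of third order, so the $H^4$-regularity assumed in the theorem (in fact $H^3$ would already suffice) together with the time-uniform $H^k$-bound from \cite{BonaSmith} furnishes a constant $C$, independent of $t$, with the remainder bounded by $C\ve^2$ for all $t\in[0,\infty)$. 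As in the momentum case, the uniformity in $t$ is inherited from the norm bound on the solution and owes nothing to the $\scrO(\ve^2(1+t))$-type estimates that govern the validity of KdV as a water-wave model; the balance law is therefore self-consistent and global in time.
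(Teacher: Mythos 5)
Your proposal is correct and follows essentially the same route as the paper: expand the two derivatives, use the KdV equation once to cancel the $\scrO(1)$ and $\scrO(\ve)$ terms (the coefficient $\Sfrac{7}{4}$ being exactly what makes the first-order terms collapse to $2\ve\eta(\eta_t+\eta_x)$), substitute the equation a second time to obtain the remainder $-3\ve^2\eta^2\eta_x-\Sfrac{1}{3}\ve^2\eta\eta_{xxx}$, and bound it in $L^2$ via the time-uniform $H^k$ bound from \cite{BonaSmith}. The only difference is cosmetic grouping of terms, so nothing further is needed.
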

\begin{proof}
The proof is similar, but simpler.
Observe that
\begin{multline}
 \dtt \Big\{\Sfrac{1}{2} + \ve \eta + \ve^2 \eta^2 \Big\} 
     + \dxx \Big\{\ve \eta + \ve^2 \Sfrac{7}{4}\eta^2 + \ve^2 \Sfrac{1}{6}\eta_{xx} \Big\} \\
= \ve \eta_t + \ve \eta_x + \ve^2 \frac{3}{2} \eta \eta_x + \ve^2 \frac{1}{6} \eta_{xxx}  + 2 \ve^2 \eta \eta_t + 2 \ve^2 \eta \eta_x \\
= - 3 \ve^3 \eta^2 \eta_x - \ve^3 \frac{1}{3} \eta \eta_{xxx}. \qquad \qquad
\end{multline}
The estimate now follows in the same way as for the momentum balance above.
\end{proof}

In some cases, it is convenient to normalize the potential energy differently,
so that the undisturbed state has zero potential energy. 
The approximate energy density $E^*$ is then defined 
by substituting the horizontal velocity $\phi_x$ given by
\eqref{phi_eq_x} into the integral
$$
\int_{0}^{1+\ve\eta} \big\{ \phi_x^2 + \phi_z^2 \big\} \, dz 
\ + \
\int_{1}^{1+\ve\eta}  z  \, dz.
$$
As shown in \cite{AK4}, in this case, the
energy density and energy flux have the respective form
$$
E^* = \ve^2 \eta^2 + \frac{1}{4} \ve^3 \eta^3 + \frac{1}{6} \ve^3 \eta \eta_{xx} 
+ \frac{1}{6} \ve^3 \eta_{x}^2,
$$ 
$$
q_E^* = \ve^2 \eta^2 + \frac{5}{4} \ve^3 \eta^3 + \frac{1}{2} \ve^3 \eta \eta_{xx}.
$$ 

\begin{theorem}
$\mathrm{(Energy \, balance) }$
Suppose $\eta$ is a solution of \eqref{KdV_nondim} with initial data
$\eta_0$ satisfying $\| \partial_x^k \eta_0 \|_{L^2} = \scrO(1)$
for some integer $k \ge 6$.
Then there is a constant $C$ such that the estimate
$$
\left\| \dtt \Big\{  \eta^2 + \frac{1}{4} \ve \eta^3 + \frac{1}{6} \ve \eta \eta_{xx} 
+ \frac{1}{6} \ve \eta_{x}^2 \Big\}
+ \dxx \Big\{  \eta^2 + \frac{5}{4} \ve \eta^3 + \frac{1}{2} \ve \eta \eta_{xx} \Big\} \right\|_{L^2}
\le C \ve^2
$$
holds for all $t \in [0,\infty)$.
\end{theorem}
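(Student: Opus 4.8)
The plan is to follow the template of the two preceding theorems: invoke the global well-posedness result of \cite{BonaSmith} to obtain a solution bounded in $C(0,\infty;H^k)$ with $k\ge 6$, so that every manipulation below is rigorous and the resulting bounds are uniform in $t$; then expand the left-hand side, repeatedly substitute the KdV equation \eqref{KdV_nondim} for $\eta_t$, and show that everything of order $\ve^0$ and $\ve^1$ cancels identically, leaving an $L^2$-bounded remainder of order $\ve^2$. As before, the additive constant and the $\ve^2$ prefactor are inert under differentiation, so I would work directly with the density $D=\eta^2+\ve(\frac14\eta^3+\frac16\eta\eta_{xx}+\frac16\eta_x^2)$ and flux $F=\eta^2+\ve(\frac54\eta^3+\frac12\eta\eta_{xx})$ appearing in the statement.

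First I would differentiate. The leading term contributes $\partial_t\eta^2+\partial_x\eta^2=2\eta\eta_t+2\eta\eta_x$, and substituting $\eta_t=-\eta_x-\ve\frac32\eta\eta_x-\ve\frac16\eta_{xxx}$ cancels the $O(1)$ part exactly and produces the $O(\ve)$ terms $-3\eta^2\eta_x-\frac13\eta\eta_{xxx}$. For the three $\ve$-correction terms in $D$ I would differentiate in time, using $\eta_t\to-\eta_x$, $\eta_{xt}\to-\eta_{xx}$ and $\eta_{xxt}\to-\eta_{xxx}$ at leading order; together with the pure spatial derivatives of the $\ve$-terms in $F$ this generates three families of first-order terms, namely multiples of $\eta^2\eta_x$, $\eta\eta_{xxx}$ and $\eta_x\eta_{xx}$. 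The crux is to check that each family cancels: the $\eta^2\eta_x$ coefficients sum to $-3-\frac34+\frac{15}{4}=0$, the $\eta\eta_{xxx}$ coefficients to $-\frac13-\frac16+\frac12=0$, and the $\eta_x\eta_{xx}$ coefficients to $-\frac16-\frac13+\frac12=0$. This simultaneous vanishing is precisely the statement that $E^*$ and $q_E^*$ have been tuned correctly, and it is the step I expect to be the main obstacle, since it requires careful bookkeeping of the terms generated by differentiating $\eta\eta_{xx}$ and $\eta_x^2$ in time and invoking the KdV relation in once- and twice-differentiated form.

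What then survives is purely of order $\ve^2$, arising from the $\ve$-corrections in the substitutions for $\eta_t$, $\eta_{xt}$ and $\eta_{xxt}$ inside the three $\ve$-terms of $D$. The highest-order contribution comes from $\frac16\ve\,\eta\,\eta_{xxt}$, which after substituting $\eta_{xxt}=-\eta_{xxx}-\ve\frac32(\eta\eta_x)_{xx}-\ve\frac16\eta_{xxxxx}$ yields a term proportional to $\ve^2\eta\eta_{xxxxx}$; all remaining $\ve^2$ terms involve products of $\eta$ and its derivatives of order at most five. To conclude I would estimate the remainder in $L^2$ using the one-dimensional embedding $H^1\hookrightarrow L^\infty$, for instance $\|\eta\eta_{xxxxx}\|_{L^2}\le\|\eta\|_{L^\infty}\|\eta_{xxxxx}\|_{L^2}\le C\|\eta\|_{H^6}^2$, which is exactly why $k\ge 6$ is required. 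Since $\|\eta(\cdot,t)\|_{H^6}$ is bounded uniformly in $t$ by the data, the constant $C$ is independent of time and the estimate $\le C\ve^2$ holds on $[0,\infty)$, just as in the momentum and energy balances above.
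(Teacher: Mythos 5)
Your proposal is correct and follows essentially the same route as the paper: differentiate, substitute the KdV relation for $\eta_t$, $\eta_{xt}$, $\eta_{xxt}$, verify that the order-one and order-$\ve$ families ($\eta^2\eta_x$, $\eta\eta_{xxx}$, $\eta_x\eta_{xx}$) cancel exactly, and bound the surviving $\scrO(\ve^2)$ remainder in $L^2$ via the uniform-in-time $H^k$ bound from \cite{BonaSmith}; your coefficient checks all agree with the paper's computation. The only slight imprecision is the claim that $\|\eta\|_{L^\infty}\|\eta_{xxxxx}\|_{L^2}$ is ``exactly why $k\ge 6$ is required''---that product is already controlled by $\|\eta\|_{H^1}\|\eta\|_{H^5}$---but this does not affect the validity of the argument under the stated hypothesis.
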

\begin{proof}
The key computation is the following.
\begin{multline*}
 \dtt \Big\{  \eta^2 + \ve \frac{1}{4} \eta^3 + \ve \frac{1}{6} \eta \eta_{xx} 
                                             + \frac{1}{6} \ve \eta_{x}^2 \Big\}
  + \dxx \Big\{  \eta^2 + \ve \frac{5}{4} \eta^3 + \ve \frac{1}{2} \eta \eta_{xx} \Big\} \\
= 2 \eta (\eta_t + \eta_x) + \ve \frac{3}{4} \eta^2 \eta_t 
                          + \ve \frac{1}{6} \eta_t \eta_{xx} + \ve \frac{1}{6} \eta \eta_{xxt}
                                                          + \ve \frac{1}{3} \eta_x \eta_{xt}
                          + \ve \frac{15}{4} \eta^2 \eta_x 
                          + \ve \frac{1}{2} \eta_x \eta_{xx} + \ve \frac{1}{2} \eta \eta_{xxx} \\
=                          \ve \frac{3}{4} \eta^2 \eta_t 
                          + \ve \frac{1}{6} \eta_t \eta_{xx} + \ve \frac{1}{6} \eta \eta_{xxt}
                                                          + \ve \frac{1}{3} \eta_x \eta_{xt}
                          + \ve \frac{3}{4} \eta^2 \eta_x 
                          + \ve \frac{1}{2} \eta_x \eta_{xx} + \ve \frac{1}{6} \eta \eta_{xxx} \\
=                          \ve \frac{3}{4} \eta^2 ( \eta_t + \eta_x) 
                         + \ve \frac{1}{6} \eta \dx^2 ( \eta_t + \eta_x) 
                         + \ve \frac{1}{6} \eta_t \eta_{xx}
                         + \ve \frac{1}{3} \eta_x \eta_{xt}
                         + \ve \frac{1}{2} \eta_x \eta_{xx} \\
= \scrO(\ve^2) + \scrO(\ve^2) + \ve \frac{1}{2} (\eta_t + \eta_x) \eta_{xx}
                             + \ve \frac{1}{3} ( - \eta_t \eta_{xx} + \eta_x \dx \eta_{t} )\\
= \scrO(\ve^2) + \scrO(\ve^2) +  \scrO(\ve^2)
+ \ve \frac{1}{3} \Big( \big( \eta_x + \scrO(\ve) \big) \eta_{xx} 
                                     + \eta_x \dx \big( - \eta_{x} + \scrO(\ve) \big) \Big).
\end{multline*}
The proof now proceeds along the same lines as above. 
\end{proof}
It should be noted that also in the case of the energy balance, the 
$L^2$ norm used in the proofs can be replaced by any Sobolev norm $H^k$ 
as long as the initial data are regular enough.

\section{Discussion}
%
%
In the present letter, it was shown that momentum and energy
conservation hold {\it approximately} in the context of the
KdV equation. The approximate momentum and energy balances
can be made rigorous solely by using well-posedness results
for the KdV equation such as provided in \cite{BonaSmith} and
in many other contributions. 

One interesting aspect of these approximate balance laws is 
that they hold independently of the fidelity of the KdV solution
as an approximation of a solution of the water-wave problem based on the full Euler equations. 
To explain this further, note that it was shown in
\cite{IsrawiKalischCRASS} that the momentum density $I$ defined
above in \eqref{momDensity} approximates the corresponding quantity in
the full Euler equations as long as the solution of the KdV
equation is a close approximation of the full Euler equations.
Indeed it was possible to show that the estimate
\begin{equation}\label{MomApprox}
\Big\| \int_{0}^{1+\ve\eta} \phi_x \, dz - I\Big\|_{H^s} \leq C \ve^2 (1+t)
\end{equation}
holds. Similar estimates can probably be shown for the quantities
$q_I$, $E$ and $q_E$.
Note however, that the estimate \eqref{MomApprox}
degenerates as $t$ gets larger and approaches $t \sim  1 / \ve$.
The results in Theorems 1, 2 and 3 in the present note have no such restriction.
Indeed, they hold {\it globally} for all $t \ge 0$.

Finally, let us point out that having the correct form of
quantities such as $I$, $q_I$, $E$ and $q_E$ can be useful 
when applying the KdV equation in situations where
knowledge of the free surface profile is insufficient.
Indeed, there are situations where the internal dynamics of the flow are 
an important factor.
An example of  such an application is the study of the energy balance in undular bores \cite{AK1}.
The energy flux $q_E$ was also used in a decisive way in a study of nonlinear shoaling in \cite{KK}. \\

\vskip 0.05in
\noindent
{\bf Acknowledgments.}
This research was supported by the Research Council of Norway under grant no. 239033/F20.
HK would like to thank the Isaac Newton Institute for Mathematical Sciences, 
Cambridge University and the Simons foundation for support and hospitality 
when work on this paper was undertaken. 
SI would like to thank University of Bergen and the Department of Mathematics
for support and hospitality when work on this paper was undertaken.


\end{document}